\documentclass[11pt,a4paper]{article}

\usepackage[a4paper,margin=2.5cm]{geometry}
\usepackage[utf8]{inputenc}
\usepackage[T1]{fontenc}
\usepackage{bbm,braket,amsmath,amsthm,amsfonts,float,tikz,algorithm2e,thm-restate}
\usepackage[english]{babel}
\usepackage[colorlinks=true,linkcolor=blue,urlcolor=blue,citecolor=blue]{hyperref}
\usepackage[capitalize,nameinlink]{cleveref}
\usepackage{authblk}

\usepackage{thmtools}
\declaretheorem[style=plain,numberwithin=section]{theorem}
\declaretheorem[style=plain,numberlike=theorem]{lemma,corollary}

\declaretheorem[style=definition,numberlike=theorem]{definition}


\newcommand{\BQL}{\mathsf{BQL}}
\newcommand{\BPL}{\mathsf{BPL}}





\newcommand{\C}{\mathbbm C}

\begin{document}

\title{Directed $st$-connectivity with few paths is in quantum logspace}
\author[1]{Simon Apers}
\author[1]{Roman Edenhofer}
\affil[1]{Université Paris Cité, CNRS, IRIF, Paris, France}
\date{}
\maketitle

\begin{abstract}
We present a $\mathsf{BQSPACE}(O(\log n))$-procedure to count $st$-paths on directed graphs for which we are promised that there are at most polynomially many paths starting in $s$ and polynomially many paths ending in $t$.
For comparison, the best known classical upper bound in this case just to decide $st$-connectivity is $\mathsf{DSPACE}(O(\log^2 n/ \log \log n))$.
The result establishes a new relationship between~$\mathsf{BQL}$ and unambiguity and fewness subclasses of $\mathsf{NL}$.
Further, we also show how to \emph{recognize} directed graphs with at most polynomially many paths between any two nodes in $\mathsf{BQSPACE}(O(\log n))$. This yields the first natural candidate for a language separating $\mathsf{BQL}$ from $\mathsf{L}$ and~$\mathsf{BPL}$.
Until now, all candidates potentially separating these classes were inherently promise problems.
\end{abstract}


\section{Introduction and summary}
\label{sec: intro}

Graph connectivity is a central problem in computational complexity theory, and it is of particular importance in the space-bounded setting.
Given a graph $G$ and two vertices $s$ and~$t$, the task is to decide whether there is a path from $s$ to $t$.
For undirected graphs the problem is denoted as~$\mathrm{USTCON}$. Aleliunas, Karp, Lipton, Lovász and Rackoff \cite{AKL+79} showed that doing a random walk for a polynomial number of steps can solve it in \emph{randomized logspace},~$\mathsf{RL}$. After a long line of work, Reingold \cite{Rei05} was able to derandomize the result and showed that the problem is already contained in \emph{deterministic logspace}, $\mathsf{L}$, and is in fact complete for that class.
In the directed graph setting the problem is denoted as $\mathrm{STCON}$, and it is complete for \emph{non-deterministic logspace}, $\mathsf{NL}$. The best known deterministic algorithm for $\mathrm{STCON}$ in terms of space complexity is due to Savitch \cite{Sav70} and runs in space $O(\log^2 n)$.
We have the following well-known inclusions
\begin{align*}
    \mathsf{L} \subseteq \mathsf{RL} \subseteq \mathsf{NL} \subseteq \mathsf{DET} \subseteq \mathsf{L}^2
\end{align*}
where $\mathsf{DET}$, introduced by Cook \cite{Coo85}, is the class of languages that are $\mathsf{NC}^1$ Turing reducible to the computation of the determinant of an integer matrix and $\mathsf{L}^2$ is the class of languages decidable in deterministic space $O(\log^2 n)$.
We refer to \cite{AB06} for further details on reductions and basic complexity classes.

The most studied quantum space-bounded complexity class is \emph{bounded error quantum logspace}, denoted~$\mathsf{BQL}$.
This is the class of languages decided in~$\mathsf{BQSPACE}(O(\log n))$, i.e., decided by a quantum Turing machine with error $1/3$ running in space $O(\log n)$ and time~$2^{O(\log n)}$. The quantum Turing machine has read-only access to the input on a classical tape and can write on a uni-directional output tape which does not count towards the space complexity.
The class~$\mathsf{BQL}$ lies in between~$\mathsf{RL}$ and~$\mathsf{DET}$.
In fact, it was recently shown by Fefferman and Remscrim~\cite{FR21} that certain restricted versions of the standard $\mathsf{DET}$-complete matrix problems are complete for~$\mathsf{prBQL}$, where~$\mathsf{prBQL}$ is the promise version of $\mathsf{BQL}$, i.e. the class of promise problems decided in~$\mathsf{BQSPACE}(O(\log n))$.
This extended the earlier work of Ta-Shma~\cite{TS13}, who showed how to invert well-conditioned matrices in $\mathsf{BQSPACE}(O(\log n))$ building on the original idea of Harrow, Hassidim and Lloyd~\cite{HHL09}.
We restate two of Ta-Shma's main results:
\begin{enumerate}
    \item (Compare \cite[Theorem 5.2]{TS13})
        Given a matrix $M \in \C^{n \times n}$ with $\|M\|_2 \leq \mathrm{poly}(n)$\footnote{Throughout this paper, $\|\cdot\|_2$ shall always denote the $\ell_2$-norm of a vector or the spectral norm of a matrix.}, we can output all of its singular values and their respective multiplicities up to~$1/\mathrm{poly}(n)$ additive accuracy in $\mathsf{BQSPACE}(O(\log n))$. In particular, we can determine $\dim(\ker(M))$ if all non-zero singular values have inverse polynomial distance from zero.
    \item (Compare \cite[Theorem 1.1]{TS13}) 
        Given two indices $s,t\in [n]$ and a matrix $M\in\C^{n\times n}$ which is \emph{poly-conditioned}, by this we mean that its singular values satisfy
        \begin{align*}
            \mathrm{poly}(n)\geq \sigma_1(M) \geq ... \geq \sigma_n(M) \geq 1/\mathrm{poly}(n),
        \end{align*}
        we can estimate $M^{-1}(s,t)$ up to $1/\mathrm{poly}(n)$ additive accuracy in $\mathsf{BQSPACE}(O(\log n))$.
\end{enumerate}

The first result above directly implies a $\mathsf{BQL}$-procedure for deciding $\mathrm{USTCON}$ (alternatively, this clearly follows from the containment $\mathsf{RL} \subseteq \mathsf{BQL}$).
To see this, note that (i)~$\mathrm{USTCON}$ can be reduced to counting the number of connected components, (ii) the dimension of the kernel of the random walk Laplacian $I-P$ is equal to the number of connected components, and (iii) for undirected graphs the spectral gap of $I-P$, that is its smallest non-zero eigenvalue, is inverse polynomially bounded from zero.
Here $I$ is the identity and $P$ is the transition matrix of a random walk.
This ties to the fact that a random walk on an undirected graph takes polynomial time to traverse the graph (see e.g.~\cite[Chapter 12]{LPW08}).
Unfortunately, for directed graphs the situation is more complicated. Importantly, the smallest non-zero singular value of the random walk Laplacian can be inverse exponentially small, and similarly the time it takes a random walk to find connected nodes can be exponential.
Hence, it is not obvious how Ta-Shma's results should be of any help in this setting.

\subsubsection*{Counting few paths}
Somewhat surprisingly, we show that by analyzing a different matrix which we call the \emph{counting Laplacian} $L = I-A$ we can use Ta-Shma's second result to solve instances of $\mathrm{STCON}$ in~$\mathsf{BQSPACE}(O(\log n))$ that seem hard classically. Here $A$ denotes the adjacency matrix of a graph.
While the random walk Laplacian is spectrally well-behaved if a random walk is efficient, we find that the counting Laplacian is spectrally well-behaved if the underlying graph is acyclic and contains only few paths.
We remark that the number of paths in a graph can be totally unrelated to the success probability of a random walk finding specific nodes.
Even if there exist very few paths, it can happen that a random walk has an extreme bias to only pick paths we are not interested in.
Consider for instance the following graph:

\begin{figure}[H]
    \centering
    \begin{tikzpicture}[scale=.8]
        \node[shape=circle,draw=black] (1) at (0,0) {$1$};
        \node[shape=circle,draw=black] (2) at (0,-1.5) {$2$};
        \node[shape=circle,draw=black] (3) at (1.5,0) {$3$};
        \node[shape=circle,draw=black] (4) at (1.5,-1.5) {$4$};
        \node[shape=circle,draw=black] (5) at (3,0) {$5$};
        \node[shape=circle,draw=black] (6) at (3,-1.5) {$6$};
        \node[] (dots) at (4.5,0) {\textbf{...}};
        \node[shape=circle,draw=black] (7) at (6.25,0) {\scriptsize $2n$-$1$};
        \node[shape=circle,draw=black] (8) at (6.25,-1.5) {\small $2n$};
    
        \path [->](1) edge node[left] {} (2);
        \path [->](1) edge node[left] {} (3);
        \path [->](3) edge node[left] {} (4);
        \path [->](3) edge node[left] {} (5);
        \path [->](5) edge node[left] {} (6);
        \path [->](5) edge node[left] {} (dots);
        \path [->](dots) edge node[left] {} (7);
        \path [->](7) edge node[left] {} (8);
    \end{tikzpicture}
    \label{exponential abort}
\end{figure}

\noindent The number of paths between any two nodes is at most one. Nonetheless, a random walk starting at node $1$ only has probability $1/2^{n-1}$ to reach node $2n$.

Given a directed graph $G=(V,E)$ with nodes $i,j\in V$ let us denote by $N(i,j)$ the number of paths from $i$ to $j$. By a path from $i$ to $j$ we mean a sequence of edges $(e_1,...,e_k)$ which joins a sequence of nodes $(v_0,...,v_k)$ such that $v_0=i$, $v_k=j$ and $e_i = (v_{i-1},v_i)$ for all~$i\in [k]$.\footnote{Some authors call this a \emph{walk} and disallow a path to visit any vertex more than once. We follow~\cite{AL98,KKR08}~in our definition. By convention we also count the empty sequence as a path from any node to itself such that~$N(i,i)\geq 1$ for all $i\in V$.}
As our first result, and as a primer for the rest of the paper, we show that we can count the number of~$st$-paths on directed graphs for which there are at most polynomially many paths between any two nodes in~$\mathsf{BQSPACE}(O(\log n))$. In particular, this allows to decide~$\mathrm{STCON}$ on such graphs.

\begin{theorem}\label{thrm: counting paths on StFewL-graphs}
    Fix a polynomial $p:\mathbb{N} \rightarrow \mathbb{N}$.
    Let $G$ be a directed graph with $|V(G)|=n$ nodes such that
    \begin{itemize}
        \item $\forall i,j \in V(G): N(i,j)\leq p(n)$.
    \end{itemize}
    There is an algorithm running in $\mathsf{BQSPACE}(O(\log n))$ that, given access to the adjacency matrix~$A$ of $G$ and $s,t\in V(G)$, returns the number of paths from $s$ to $t$.
\end{theorem}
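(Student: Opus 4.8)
The plan is to reduce the problem to estimating a single entry of the inverse of the \emph{counting Laplacian} $L = I - A$, and then to invoke Ta-Shma's second restated result. Concretely, I would argue that $(L^{-1})_{s,t}$ equals $N(s,t)$, that $L$ is poly-conditioned so that this entry is computable up to $1/\poly(n)$ additive error in $\mathsf{BQSPACE}(O(\log n))$, and that since $N(s,t)$ is a non-negative integer bounded by $p(n)$, an additive error below $1/2$ lets us recover it exactly by rounding.

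First I would observe that the promise forces $G$ to be acyclic: a directed cycle through a vertex $v$ could be traversed arbitrarily often, giving $N(v,v) = \infty > p(n)$. Hence, after a topological reordering of the vertices, $A$ is strictly upper triangular, so $A^n = 0$, and $L = I - A$ is invertible with
\[
  L^{-1} = I + A + A^2 + \dots + A^{n-1} = \sum_{k=0}^{n-1} A^k .
\]
Since $(A^k)_{i,j}$ counts the paths of length exactly $k$ from $i$ to $j$, and in an $n$-vertex DAG every path is simple and hence of length at most $n-1$, we get $(L^{-1})_{i,j} = \sum_{k \ge 0} (A^k)_{i,j} = N(i,j)$; in particular $(L^{-1})_{s,t} = N(s,t)$, which is what we want, and it is an integer in $\{0,1,\dots,p(n)\}$.

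Next I would check that $L$ satisfies the poly-conditioning hypothesis of Ta-Shma's second result. For the largest singular value, $\|L\|_2 \le \|I\|_2 + \|A\|_2 \le 1 + \|A\|_F \le 1 + n$. For the smallest, it suffices to bound $\|L^{-1}\|_2$ from above, since $\sigma_n(L) = 1/\|L^{-1}\|_2$; but by the previous paragraph every entry of $L^{-1}$ equals some $N(i,j) \le p(n)$, so $\|L^{-1}\|_2 \le \|L^{-1}\|_F \le n\,p(n)$, giving $\sigma_n(L) \ge 1/(n\,p(n))$. Thus all singular values of $L$ lie between $1/\poly(n)$ and $\poly(n)$, and these bounds are explicit in $n$ and the given polynomial $p$.

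Finally, the matrix $L = I - A$ has entries in $\{-1,0,1\}$ and is computable entrywise from the input $A$ in logspace, and we supply the conditioning polynomials above to Ta-Shma's algorithm (the second restated result), running it on $L$, $s$, $t$ with target additive accuracy $1/3$. This outputs, in $\mathsf{BQSPACE}(O(\log n))$, an estimate of $(L^{-1})_{s,t} = N(s,t)$ within $1/3$, and rounding to the nearest integer yields $N(s,t)$ exactly. The only step that really requires an argument is the poly-conditioning of $L$ — everything else is bookkeeping — and, as indicated, that bound drops out immediately from the path-counting interpretation of $L^{-1}$ together with the polynomial promise; so the choice of the counting Laplacian (rather than the random-walk Laplacian) is the conceptual crux.
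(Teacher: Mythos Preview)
Your proposal is correct and follows essentially the same line as the paper's own proof: both observe that the path bound forces acyclicity, deduce $A^n=0$ and $L^{-1}=\sum_{k<n}A^k$ with $(L^{-1})_{i,j}=N(i,j)$, bound the extreme singular values of $L$ via $\|L\|_2\leq n$ and $\sigma_n(L)=1/\|L^{-1}\|_2\geq 1/(n\,p(n))$, and then invoke Ta-Shma's poly-conditioned inversion with accuracy $1/3$ followed by rounding. The only cosmetic differences are that you route the norm bounds through the Frobenius norm while the paper uses the max-norm inequality $\|M\|_2\leq n\|M\|_{\max}$, and you justify $A^n=0$ via a topological ordering rather than directly from nilpotency; neither changes the argument.
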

\begin{proof}
    First of all, observe that the finite path bound between any two nodes implies that the graph is in particular acyclic.
    Second, note that $A^k(i,j)$ is equal to the number of paths of length $k$ from node $i$ to node $j$.
    Both in mind, it follows that $A^n = 0$, i.e. $A$ is nilpotent.
    As a consequence we obtain that the inverse of the counting Laplacian exists and is equal to 
    \begin{equation*}
        L^{-1} = (I-A)^{-1} = I + A + A^2 + ... + A^{n-1}.
    \end{equation*}
    Clearly, its entries simply count the total number of paths from $i$ to $j$, i.e.~$L^{-1}(i,j)~=~N(i,j)$.
    Crucially, we find that the polynomial bound on the number of paths ensures that the matrix is poly-conditioned.
    Observe that the $\max$-norm of a matrix, i.e. its maximum entry in absolute value, can be used to bound its spectral norm. More precisely, we have that~$\|M\|_2~\leq~n~\cdot~\|M\|_{\max}$ for any matrix $M$.
    Using that $\|L^{-1}\|_{\max} = \max_{i,j\in [n]} |L^{-1}(i,j)| \leq p(n)$ and $\| L \|_{\max} = 1$, we thus find bounds for the largest and smallest singular value of~$L$,
    \begin{align*}
        n \geq \sigma_1(L) 
        \quad \text{and}\quad \sigma_n(L) = \frac{1}{\|L^{-1}\|_2} \geq \frac{1}{n\cdot\|L^{-1}\|_{\max}} \geq \frac{1}{n\cdot p(n)}.
    \end{align*}
    Hence, the counting Laplacian is poly-conditioned and we can apply Ta-Shma's matrix inversion algorithm, the second restated result above, to approximate entry $L^{-1}(s,t) = N(s,t)$ up to additive error $1/3$ in $\mathsf{BQSPACE}(O(\log n))$. Rounding to the closest integer gives the number of paths from $s$ to $t$.
\end{proof}

The proof of \cref{thrm: counting paths on StFewL-graphs} uses a reduction of summing matrix powers to matrix inversion.
The idea for this is well-known, and appeared in this context in early work by Cook \cite{Coo85}, and more recently by Fefferman and Remscrim \cite{FR21}. However, they did not study the implications to the space complexity of $\mathrm{STCON}$.

In \cref{sec:4} we push the algorithmic idea of \cref{thrm: counting paths on StFewL-graphs} further by doing a more fine-grained analysis of the counting Laplacian's singular values and singular vectors.
We show that we can already count $st$-paths in $\mathsf{BQSPACE}(O(\log n))$ on directed graphs for which only the number of paths starting from $s$ and the number of paths ending in~$t$ are polynomially bounded.
We state the formal result here.

\begin{restatable}{theorem}{main}
    \label{main}
    Fix a polynomial $p: \mathbb{N} \rightarrow \mathbb{N}$. Let $G$ be a directed graph with $|V(G)|=n$ nodes such that 
    \begin{itemize}
        \item $\forall j \in V(G) : N(s,j) \leq p(n)$ and $N(j,t) \leq p(n)$.
    \end{itemize}
    There is an algorithm running in $\mathsf{BQSPACE}(O(\log n))$ that, given access to the adjacency matrix~$A$ of $G$ and $s,t\in V(G)$, returns the number of paths from $s$ to $t$.
\end{restatable}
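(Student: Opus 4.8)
The plan is to analyze the counting Laplacian $L=I-A$ again, but this time through its singular value decomposition: under the weaker two one-sided promises $L$ need no longer be poly-conditioned — the graph may contain cycles and $L$ may even be singular — so the black-box appeal to Ta-Shma's inversion algorithm used in \cref{thrm: counting paths on StFewL-graphs} is unavailable. The first step is structural. Let $R_s$ be the set of vertices reachable from $s$, let $R_t$ be the set of vertices that can reach $t$, and partition $V(G)$ into $H=R_s\cap R_t$, $P=R_s\setminus R_t$, $Q=R_t\setminus R_s$ and $W=V\setminus(R_s\cup R_t)$. Prepending a fixed path from $s$ to $i$ to any path from $i$ to $j$ shows $N(i,j)\le N(s,j)\le p(n)$ for all $i,j\in R_s$, and symmetrically, using a fixed path from $j$ to $t$, $N(i,j)\le p(n)$ for all $i,j\in R_t$; in particular the subgraphs induced on $R_s$ and on $R_t$ are acyclic. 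Hence the principal submatrices $I-A_H$, $I-A_P$, $I-A_Q$ are poly-conditioned exactly as in the proof of \cref{thrm: counting paths on StFewL-graphs}, whereas only $I-A_W$ can be ill-conditioned or singular; and since every $s$–$t$ path lies entirely inside $H$, we have $N(s,t)=(I-A_H)^{-1}(s,t)$.

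Second, I would exploit that the edges run one-directionally across these blocks: every edge out of $R_s$ stays in $R_s$; no edge enters $W$ except from $W$ or $Q$; and no edge leaves $W$ except to $W$ or $P$. Ordering the vertices as $(P,H,W,Q)$ therefore makes $L$ block lower triangular. Solving $Lx=e_t$ and $x^\top L=e_s^\top$ block by block — taking the minimum-norm solution when $W$ carries cycles, i.e. passing to the Moore–Penrose pseudo-inverse $L^+$ — one finds that the $t$-column of $L^+$ and the $s$-row of $L^+$ have no mass whatsoever on the $W$-block, have $\ell_2$-norm at most $\poly(n)$, and that $e_s^\top L^+ e_t=(I-A_H)^{-1}(s,t)=N(s,t)$ irrespective of what happens inside $W$. (Equivalently one may first pass to the time-unrolled graph on $V\times\{0,\dots,n-1\}$ together with a sink, which is automatically acyclic and for which $N(s,t)$ is an honest entry of $(I-A)^{-1}$; I would probably do this just to make $L$ invertible and avoid the pseudo-inverse.)

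The technical heart is a spectral decoupling step. Write $L=\sum_i\sigma_i u_i v_i^\top$. From $\|L^+e_t\|_2\le\poly(n)$ and $\|e_s^\top L^+\|_2\le\poly(n)$ one reads off, term by term, that $|\langle u_i,e_t\rangle|\le\poly(n)\,\sigma_i$ and $|\langle v_i,e_s\rangle|\le\poly(n)\,\sigma_i$ for every $i$: the small singular values simply cannot couple significantly to $e_s$ or to $e_t$. Consequently, choosing a threshold $\varepsilon=1/\poly(n)$ below the conditioning of the good blocks, the discarded tail $\sum_{\sigma_i<\varepsilon}\sigma_i^{-1}\langle e_s,v_i\rangle\langle u_i,e_t\rangle$ is bounded by $\poly(n)^2\sum_{\sigma_i<\varepsilon}\sigma_i\le n\cdot\poly(n)^2\cdot\varepsilon$, which is $<1/3$ for a suitable $\varepsilon$. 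Hence the \emph{thresholded} value $\sum_{\sigma_i\ge\varepsilon}\sigma_i^{-1}\langle e_s,v_i\rangle\langle u_i,e_t\rangle$ equals $N(s,t)$ up to additive error $1/3$, and this is computable in $\BQSPACE(O(\log n))$ by the same quantum singular-value machinery underlying Ta-Shma's algorithm: block-encode $L$ (using $\|L\|_2\le n$), run phase estimation on its singular values, apply the bounded map sending $\sigma$ to $\sigma^{-1}$ for $\sigma\ge\varepsilon$ and to $0$ otherwise (a degree-$\poly(n)$ bounded polynomial, since $\varepsilon=1/\poly(n)$), uncompute, and extract the $(s,t)$ entry by a Hadamard test; finally round to the nearest integer.

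The main obstacle is to make the decoupling lemma precise, and in particular to handle one delicate point: the algorithm must never actually compute the partition $(P,H,W,Q)$ — computing $R_s$ or $R_t$ is $\NL$-hard — so the partition may enter only in the \emph{analysis}, used solely to certify the norm bounds on the relevant row and column of $L^+$ and hence the size of the discarded tail, while the algorithm itself runs the oblivious thresholded inversion on $L$. I also expect to spend effort pinning down constants (a single $\varepsilon$ lying below every good-block singular value yet keeping the tail below $1/3$), justifying the minimum-norm bookkeeping in the possibly-singular case, and re-deriving the space- and time-bounded guarantees of the inversion primitive of \cite{TS13} with the hypothesis "poly-conditioned" replaced by "poly-conditioned on the subspace spanned by $e_s$, $e_t$ and their images under $L$".
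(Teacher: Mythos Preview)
Your proposal is correct and its technical heart---the spectral decoupling showing $|\langle e_s,v_i\rangle|,|\langle u_i,e_t\rangle|\le\poly(n)\cdot\sigma_i$ from the $\poly(n)$ norm bounds on the $s$-row and $t$-column of the (pseudo)inverse, followed by a thresholded inversion that discards singular values below $1/\poly(n)$---is exactly the paper's proof. The paper is simply more direct: it passes immediately to the layered graph (your parenthetical alternative), after which $L$ is invertible and the bounds $\|(L^{-1})^\top e_s\|_2^2,\|L^{-1}e_t\|_2^2\le n\,p(n)^2$ are read off from the fact that these vectors have entries $N(s,j)$ and $N(j,t)$ respectively---so your entire $(P,H,W,Q)$ block decomposition, while correct, is scaffolding the paper never needs.
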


Classically (deterministically or randomly), the best known space bound just to decide $\mathrm{STCON}$ on graphs promised to satisfy a polynomial bound on the number of paths between any two nodes as in \cref{thrm: counting paths on StFewL-graphs} or only on the number of paths starting from $s$ and on the number of paths ending in $t$ as in \cref{main} is $\mathsf{DSPACE}(O(\log^2(n)/\log\log(n)))$~\cite{AL98,GSTV11}.
Alternatively, as noticed in \cite{BJLR91,Lan97} we can also solve such $\mathrm{STCON}$ instances simultaneously in deterministic polynomial time and space $O(\log^2 n)$.
We elaborate on these bounds in the next section.

\subsubsection*{Unambiguity, Fewness and a language in $\mathsf{BQL}$ (maybe) not in $\mathsf{L}$}

Previous works studied restrictions on the path count in connection to ``unambiguity'' and ``fewness'' of configuration graphs of space-bounded Turing machines.
These notions were introduced to interpolate between determinism and non-determinism, and gives rise to complexity classes between $\mathsf{L}$ and $\mathsf{NL}$. We formally introduce these classes in \cref{section2}.
As a direct consequence of \cref{thrm: counting paths on StFewL-graphs} we obtain:
\begin{restatable}{corollary}{stfewl}
\label{stfewl}
    $\mathsf{StrongFewL} \subseteq \mathsf{BQL}$
\end{restatable}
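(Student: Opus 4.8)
The plan is to apply \cref{thrm: counting paths on StFewL-graphs} directly to the configuration graph of a $\StFewL$-machine, so the work is essentially to check that the promise of that theorem is exactly what the class $\StFewL$ provides. Recall that $L \in \StFewL$ means there is a nondeterministic logspace Turing machine $M$ deciding $L$ such that, on every input $x$ of length $n$, the configuration graph $G_{M,x}$ has at most $p(n)$ directed paths between any ordered pair of configurations, for a fixed polynomial $p$. Fix such $M$ and $p$, and let $s$ be the start configuration of $M$ on $x$. We may assume $M$ has a unique accepting configuration $t$: a standard normalisation, or else work with $G_{M,x}$ augmented by a new sink $t$ and an edge to $t$ from each of the at most $\poly(n)$ accepting configurations, which multiplies the path-count bound by at most $\poly(n)$, keeps the graph acyclic, and keeps the adjacency matrix logspace-computable. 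Then $x \in L$ if and only if $N_{G_{M,x}}(s,t) \geq 1$.

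First I would record the two elementary structural facts about $G_{M,x}$. (i) It has $N = 2^{O(\log n)} = \poly(n)$ vertices — a configuration is the control state together with the input-head position and the $O(\log n)$-bit worktape and its head — so $\log N = O(\log n)$. (ii) Its adjacency matrix $A$ is computable in deterministic space $O(\log n)$: given $O(\log n)$-bit encodings of two configurations $c_1, c_2$, simulate a single step of $M$ from $c_1$ (reading the one relevant bit of $x$) and test whether $c_2$ is among the successors. By the very definition of $\StFewL$, the graph $G_{M,x}$ (or its $t$-augmentation) satisfies the hypothesis of \cref{thrm: counting paths on StFewL-graphs} verbatim, with the polynomial $p$ (possibly enlarged as above).

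Next I would invoke \cref{thrm: counting paths on StFewL-graphs} on $G_{M,x}$ with source $s$ and sink $t$: it returns $N_{G_{M,x}}(s,t)$ in $\mathsf{BQSPACE}(O(\log N)) = \mathsf{BQSPACE}(O(\log n))$, answering each of its queries to an entry of $A$ by the $O(\log n)$-space subroutine of (ii). Composing a deterministic $O(\log n)$-space oracle for $A$ with the $O(\log n)$-space quantum algorithm costs only $O(\log n)$ additional space — one stores the algorithm's virtual input-head position and recomputes the requested bit of $A$ on a reusable classical scratch area, leaving the quantum workspace untouched — so the whole procedure runs in $\mathsf{BQSPACE}(O(\log n))$. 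Accepting iff the returned value is nonzero decides $L$, hence $L \in \BQL$.

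I do not expect a genuine obstacle here: \cref{thrm: counting paths on StFewL-graphs} does all the real work, and acyclicity of $G_{M,x}$ need not be argued separately since it is already derived inside that proof from the path bound. The subtlest point — and the one where a reader could slip — is that one must use the bound on the number of paths between \emph{any} two configurations; this is exactly why $\StFewL$ appears and why the same argument does not obviously give $\ReachFewL \subseteq \BQL$ or $\FewL \subseteq \BQL$, since there only the paths out of $s$, respectively only the accepting paths, are controlled.
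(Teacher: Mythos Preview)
Your proposal is correct and follows exactly the route the paper takes: the corollary is stated as a direct consequence of \cref{thrm: counting paths on StFewL-graphs}, and you have simply spelled out the standard details (polynomial-size configuration graph, logspace-computable adjacency matrix, space-efficient composition) that make ``direct'' precise. Note that the paper's definition of $\StFewL$ already assumes a unique accepting configuration, so your augmentation step is unnecessary but harmless.
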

\noindent where $\mathsf{StrongFewL}$ denotes the class of languages which are decidable by an $\mathsf{NL}$-Turing machine whose configuration graphs for all inputs are \emph{strongly-few}, that is graphs for which there are at most polynomially many paths between any two nodes.
By some preprocessing in the algorithm of \cref{thrm: counting paths on StFewL-graphs} we can actually check in $\mathsf{BQSPACE}(O(\log n))$ whether a given graph is strongly-few.\footnote{Unfortunately, we see no way to do the same for the weaker promise of \cref{main}.}
We obtain the subsequent language containment (notably, the containment of this language in $\mathsf{StrongFewL}$ is not known).

\begin{restatable}{theorem}{STCONsf}\label{thrm: STCONsf}
\label{STCONsf}
    The language
    \begin{align*}
        \mathrm{STCON}_{\mathrm{sf}}
            = \{ \langle G,s,t,1^k\rangle ~|~\forall i,j\in V(G): N(i,j)\leq k \text{ and } N(s,t) \geq 1 \}
    \end{align*}
    is contained in $\mathsf{BQL}$.
\end{restatable}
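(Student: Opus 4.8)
The plan is to re-run the engine of \cref{thrm: counting paths on StFewL-graphs} --- Ta-Shma's inversion applied to the counting Laplacian $L=I-A$ --- but, since we may no longer assume the path-bound promise, to use the computed inverse to \emph{certify} it. The guiding observation is that if $N(i,j)\le k$ for all $i,j$, then (as in the proof of \cref{thrm: counting paths on StFewL-graphs}) $G$ is acyclic, $L$ is invertible, and $L^{-1}$ is exactly the integer matrix of path counts $N(i,j)$; in particular $L^{-1}$ has every entry in $\{0,1,\dots,k\}$, has every diagonal entry equal to $1$ (the empty walk is the only closed walk), and satisfies $L^{-1}=I+A\,L^{-1}$. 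Conversely --- and this is the crux --- I will argue that these cheaply-verifiable conditions hold \emph{only} when $G$ genuinely satisfies $N(i,j)\le k$ for all $i,j$, so checking them suffices.

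Writing $N$ for the input length (so $n=|V(G)|\le N$, and $k\le N$ because $1^k$ is written in unary), the algorithm is: (1) use \cite[Theorem~5.2]{TS13} to estimate $\sigma_n(L)$ to additive accuracy $1/N^4$ --- legitimate since $\|L\|_2\le n+1\le\poly(N)$ --- and reject if the estimate is below (roughly) $1/(nk)$; this is sound because if $N(i,j)\le k$ everywhere then $\|L^{-1}\|_2\le n\|L^{-1}\|_{\max}\le nk$, so $\sigma_n(L)\ge 1/(nk)$ and we pass, while surviving the test forces $\sigma_n(L)\ge 1/\poly(N)$, hence $L$ to be $\poly(N)$-conditioned. (2) For each pair $(i,j)$ in turn (recomputing entries on demand, never storing more than $O(\log N)$ bits), use \cite[Theorem~1.1]{TS13} to estimate $L^{-1}(i,j)$ within $1/4$ and round to the nearest integer $\hat N(i,j)$; reject unless every $\hat N(i,j)\in\{0,\dots,k\}$, every $\hat N(i,i)=1$, and $\hat N(i,j)=[i=j]+\sum_{h:(i,h)\in E}\hat N(h,j)$ for all $i,j$ (an exact identity between integers of magnitude at most $nk\le\poly(N)$). (3) If nothing has been rejected, accept iff $\hat N(s,t)\ge 1$. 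Each step runs in $\BQSPACE(O(\log N))$.

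For a ``yes'' instance correctness is immediate: $L^{-1}$ is the integer matrix of path counts with the three listed properties, step~(1) passes, step~(2) reconstructs $\hat N(i,j)=N(i,j)$ exactly and passes, and step~(3) correctly reports whether $N(s,t)\ge 1$. The ``no'' direction rests on the lemma: \emph{if $G$ has a directed cycle and $I-A$ is invertible, then $(I-A)^{-1}$ has a strictly negative entry}. Indeed, from $M:=(I-A)^{-1}=I+AM$ one gets $M=\sum_{\ell=0}^{r-1}A^{\ell}+A^{r}M$ for all $r\ge 0$; if $M$ were entrywise nonnegative, then since $A\ge 0$ we would have $M(v,v)\ge\sum_{\ell=0}^{r-1}A^{\ell}(v,v)$, whose right-hand side tends to infinity for any vertex $v$ on a cycle, contradicting finiteness of $M(v,v)$. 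Now suppose a ``no'' instance is accepted. Surviving step~(1) makes $L$ invertible; the identity checked in step~(2) is precisely $(I-A)\hat N=I$, so $\hat N=L^{-1}$, and therefore $L^{-1}$ is an integer matrix with entries in $\{0,\dots,k\}$, in particular entrywise nonnegative. By the lemma $G$ is then acyclic, so $L^{-1}$ is the path-count matrix and $N(i,j)=\hat N(i,j)\le k$ for all $i,j$; together with the acceptance condition $\hat N(s,t)=N(s,t)\ge 1$ this puts the instance in $\mathrm{STCON}_{\mathrm{sf}}$, a contradiction.

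The main obstacle, and the reason both step~(1) and step~(2) carry genuine checks, is that cyclicity is invisible to the spectrum of $L$ alone: already a $2$-cycle with one extra self-loop has $A$ with eigenvalues the golden ratio and its conjugate, so $I-A$ is invertible and well-conditioned, and singular-value estimation cannot by itself exclude cyclic graphs; moreover $(I-A)^{-1}$ of a cyclic graph can have negative entries as tiny as $1/|\det(I-A)|$ in absolute value, i.e.\ possibly inverse-exponentially small, which the rounding in step~(2) would erase. This is exactly why I insist on the \emph{exact} integer identity $\hat N=I+A\hat N$: it pins $\hat N$ down to $L^{-1}$ and forces $L^{-1}$ to be integral, after which the negative entry furnished by the lemma is automatically at most $-1$ and is caught by the range check. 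It remains to handle the parameter bookkeeping in $k$ --- done above by measuring space and accuracy against the input length $N$ rather than $n$, using that $k$ is unary, which keeps every matrix polynomially conditioned in the input size --- and to dispose of the degenerate case $k=0$ (for which $\mathrm{STCON}_{\mathrm{sf}}$ is empty) separately.
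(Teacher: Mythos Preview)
Your approach is genuinely different from the paper's and contains a nice idea: rather than forcing acyclicity by passing to the layered graph $\lay(G)$ (as the paper does), you work on $G$ directly and try to \emph{certify} acyclicity via the identity $(I-A)\hat N=I$ together with your lemma that a cyclic $G$ with invertible $I-A$ forces a strictly negative entry in $(I-A)^{-1}$. The lemma is correct and elegant.

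However, the soundness argument has a genuine gap. You write ``the identity checked in step~(2) is precisely $(I-A)\hat N=I$, so $\hat N=L^{-1}$'', treating $\hat N$ as a single well-defined matrix. But you explicitly recompute entries on demand via a randomized quantum procedure, and when $G$ is cyclic the entries of $L^{-1}$ need not be integers --- they are rationals with denominator $|\det(I-A)|$, which can be any integer. If some $L^{-1}(h,j)$ lies within $1/4$ of a half-integer, then different invocations of Ta-Shma's estimator (each accurate to $1/4$) can round to different integers, so the value $\hat N(h,j)$ used inside the identity check for row $i$ need not agree with the value used in the range check for $(h,j)$, nor with the value used in the identity check for some other row $i'$. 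Passing all $n^2$ identity checks therefore does \emph{not} produce a single integer matrix $M$ with $(I-A)M=I$; it only says that for each $(i,j)$ separately, some choice of nearby integers satisfied one linear equation. From this you cannot conclude $L^{-1}$ is integral, and without integrality your negative entry could be, as you yourself note, as small as $1/|\det(I-A)|$ in absolute value --- invisible to both the range check and the rounding.

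The paper sidesteps this entirely: by first mapping to $\lay(G)$ it guarantees acyclicity, so the counting Laplacian of $\lay(G)$ has an inverse that is \emph{always} the integer path-count matrix of $\lay(G)$. Rounded estimates are then automatically consistent across recomputations, and cyclicity of the original $G$ is detected simply by testing whether $N\big((i,0),(i,n)\big)\ge 2$ for some $i$ --- an inequality on a single entry, with no cross-entry identity to verify. Your route could perhaps be rescued by adding an explicit ``close to an integer'' test with carefully tuned thresholds to force consistency, but as written the soundness claim does not go through; the layering trick is precisely what makes the paper's argument clean.
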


This seems to be the first concrete example of a \emph{language} in $\mathsf{BQL}$ not known to lie in~$\mathsf{L}$ or~$\mathsf{BPL}$.
As far as we are aware, before this work only \emph{promise problems} were known that lie in~$\mathsf{prBQL}$ but which are potentially not contained in~$\mathsf{prBPL}$~\cite{TS13,FL18,FR21,GLW23}.

We now summarize briefly what is known classically.
A language similar to $\mathrm{STCON}_{\mathrm{sf}}$, namely 
\begin{align*}
    \mathrm{STCON}_{\mathrm{ru}} = \{ \langle G,s,t\rangle ~|~ \forall j \in V(G): N(s,j) \leq 1 \text{ and } N(s,t) = 1 \}
\end{align*} has been studied before.
It was first introduced by Lange \cite{Lan97} who showed that it is complete for $\mathsf{ReachUL}$, which is the class of languages decidable by an $\mathsf{NL}$-Turing machine whose configuration graphs for all inputs are \emph{reach-unambiguous}, meaning that there is a unique computation path from the start configuration to any reachable configuration.\footnote{Note that the $\mathsf{ReachUL}$-hardness of $\mathrm{STCON}_{\mathrm{ru}}$ is trivial but the completeness is not. This is because the uniqueness in the definition of the complexity class is used as a restriction on the machine, while it is used as an acceptance criterion in the definition of the language.}
Further, it was noticed in~\cite{Lan97, BJLR91} that $\mathsf{ReachUL}$ is contained in~$\mathsf{SC}^2$, which is the class of languages decidable simulatenously in deterministic polynomial time and space~$O(\log^2 n)$.
Additionally, Allender and Lange~\cite{AL98} found that $\mathrm{STCON}$ on graphs promised to be reach-unambiguous is solvable in deterministic space $O(\log^2(n)/ \log\log(n))$ implying~$\mathsf{ReachUL}~\subseteq~\mathsf{DSPACE}(O(\log^2 (n)/ \log \log(n)))$.
In particular, these results put~$\mathrm{STCON}_{\mathrm{ru}}$ in~$\mathsf{SC}^2$ and in~$\mathsf{DSPACE}(O(\log^2(n)/ \log\log(n)))$.
Also, more recently Garvin, Stolee, Tewari and Vinodchandran \cite{GSTV11} proved that $\mathsf{ReachUL}$ is equal to $\mathsf{ReachFewL}$, where $\mathsf{ReachFewL}$ is the class of languages decidable by an $\mathsf{NL}$-Turing machine whose configuration graphs for all inputs are \emph{reach-few}, i.e. they have at most polynomially many computation paths from its start configuration to any reachable configuration.
We thus have
\begin{align*}
    \mathsf{StrongFewL} \subseteq \mathsf{ReachFewL} = \mathsf{ReachUL} \subseteq \mathsf{SC}^2, \text{ }\mathsf{DSPACE}(O(\log^2(n) / \log \log(n))).
\end{align*}
Notably, while \cite{GSTV11} implies a procedure to solve $\mathrm{STCON}$ on graphs promised to be reach-few in $\mathsf{DSPACE}(O(\log^2(n) / \log \log(n)))$ (in particular, this includes the graphs from \cref{thrm: counting paths on StFewL-graphs,main}), this does not allow to verify whether a general graph satisfies this promise.
Indeed, none of the upper bounds for $\mathrm{STCON}_{\mathrm{ru}}$ directly carry over to $\mathrm{STCON}_{\mathrm{sf}}$, for which the best classical space bound to date seems to be $O(\log^2 n)$.

\subsubsection*{Conclusion and open questions}

Summarizing, we show that in quantum logspace we can count $st$-paths on graphs for which even solving $st$-connectivity is not known to be possible in classical (deterministic or randomized) logspace.
Further, we obtain the first language in $\mathsf{BQL}$ not known to lie in $\mathsf{L}$ or~$\mathsf{BPL}$.
Our work also yields a number of open questions.

An obvious first question is whether $\mathrm{STCON}_{\mathrm{sf}}$ really separates $\mathsf{BQL}$ and $\mathsf{BPL}$.
A first step towards answering this might be tackling the related question, whether we can carry over some of the known upper bounds for $\mathrm{STCON}_{\mathrm{ru}}$ to $\mathrm{STCON}_{\mathrm{sf}}$. That is, is $\mathrm{STCON}_{\mathrm{sf}}$ also contained in~$\mathsf{SC}^2$ or in~$\mathsf{DSPACE}(O(\log^2(n)/\log\log(n)))$?
Note that the known $\mathsf{prBQL}$-complete problems are not known to satisfy these upper bounds \cite{TS13,FL18,FR21,GLW23}.

Further, we wonder whether it is possible to improve the $O(\log^2(n)/ \log \log(n))$ space bound by Allender and Lange.
In fact, Allender recently asked this in an article \cite{All23} in which he reflects on some open problems he encountered throughout his career. Allender suspects that for the restricted case of \emph{strongly-unambiguous} graphs, also called \emph{mangroves}, for which the number of paths between any two nodes is bounded by one, there should exist an algorithm deciding $\mathrm{STCON}$ running in deterministic space $O(\log n)$. He even offers a~\$1000 reward for any improvement of their space bound, already for this restricted case. A dequantization of our results would thus yield some good pocket money.\footnote{Even if the dequantization were randomized, i.e. in $\mathsf{BPSPACE}(O(\log n))$, it would still imply a $\mathsf{DSPACE}(O(\log^{3/2} n))$ bound by the inclusion $\mathsf{BPSPACE}(O(\log n)) \subseteq \mathsf{DSPACE}(O(\log^{3/2} n))$ due to Saks and Zhou~\cite{SZ99} which has recently been slightly improved on by Hoza \cite{Hoz21}.}

Another natural question is whether in $\mathsf{BQSPACE}(O(\log n))$ we can solve $\mathrm{STCON}$ on a larger class of graphs such as reach-few ones, where only the number of paths from $s$ is polynomially bounded.
This relaxes our promise in \cref{main}.
Unfortunately, our current approach of using an effective pseudoinverse seems to require our stronger promise.
More generally, we feel that a better understanding of the singular values and vectors of directed graph matrices will yield further insights into the utility of quantum space-bounded computation for solving problems on directed graphs.

We further believe it would be interesting to investigate whether there is a quantum algorithm running simultaneously in polynomial time and truly sublinear space~$O(n^{1-\varepsilon})$ for some~$\varepsilon>0$ deciding $\mathrm{STCON}$ on general graphs. Note that the best known classical algorithm for deciding $\mathrm{STCON}$ on general graphs and running in polynomial time needs almost linear space~$\Omega(n/2^{\sqrt{\log n}})$ as described in \cite{BBRS98}.

Finally, the link between poly-conditionedness and bounds on the path count raises the question of whether some variation of $\mathrm{STCON}$ could be proven complete for $\mathsf{BQL}$.


\section{Space-bounded computation}
\label{section2}

In \cref{sec:TMs} we introduce the Turing machine model of space-bounded computation and define the most important appearing complexity classes.
We mainly follow the definitions from Ta-Shma~\cite{TS13} and refer to~\cite[Section 2.2]{FR21} for a discussion on the equivalence of the quantum Turing machine model and the quantum circuit model.
In \cref{sec:complete} we mention some complete problems of promise versions of the defined complexity classes.

\subsection{Turing machines} \label{sec:TMs}

A \emph{deterministic} space-bounded Turing machine (DTM) acts according to a transition function~$\delta$ on three semi-infinite tapes: A read-only tape where the input is stored, a read-and-write work tape and a uni-directional write-only tape for the output.
The TMs computation time is defined as the number of transition steps it performs on an input, and its computation space is the number of used cells on the work tape, i.e. we do not count the number of cells on the input or output tape towards its computation space.
DTMs with space-bound $s(n)$ for inputs of length $n$ give rise to~$\mathsf{DSPACE}(s(n))$.
We define $\mathsf{L}$ as the class of languages decided in $\mathsf{DSPACE}(O(\log n))$ and $\mathsf{L}^2$ as the class of languages decided in $\mathsf{DSPACE}(O(\log^2 n))$.

A \emph{non-deterministic} Turing machine (NTM) is similar to a DTM except that it has two transition functions $\delta_0$ and $\delta_1$. At each step in time the machine non-deterministically chooses to apply either one of the two. 
It is said to accept an input if there is a sequence of these choices so that it reaches an accepting configuration and it is said to reject the input if there is no such sequence of choices.
We obtain $\mathsf{NSPACE}(s(n))$. Further, $\mathsf{NL}$ is the class of languages decided in~$\mathsf{NSPACE}(O(\log n))$.

A \emph{probabilistic} space-bounded Turing machine (PTM) is again similar to a DTM but with the additional ability to toss random coins. This can be conveniently formulated by a fourth tape that is uni-directional, read-only and initialized with uniformly random bits at the start of the computation. It does not count towards the space.
A language is said to be decided in~$\mathsf{BPSPACE}_{a,b}(s(n))$ if there is a PTM running in space $s(n)$ and time\footnote{\label{time}The time bound does not follow from the space-bound and is equivalent to demanding that the TM absolutely halts for all possible assignments of the random coins tape.} $2^{O(s(n))}$ deciding it with completeness error $a\in[0,1]$ and soundness error $b\in[0,1]$, that is every input in the language is accepted with probability at least $a$ and every input not in the language is accepted with probability at most $b$. 
Also, we write~$\mathsf{BPSPACE}(s(n))$ for~$\mathsf{BPSPACE}_{\frac{1}{3},\frac{2}{3}}(s(n))$ and $\mathsf{RSPACE}(s(n))$ for~$\mathsf{BPSPACE}_{\frac{1}{2},0}(s(n))$. Further,~$\mathsf{BPL}$ is the class of languages decided in~$\mathsf{BPSPACE}(O(\log n))$ and~$\mathsf{RL}$ is the class of languages decided in~$\mathsf{RSPACE}(O(\log n))$.

A \emph{quantum} space-bounded Turing machine (QTM) is a DTM with a fourth tape for quantum operations instead of a random coins tape. 
The transition function $\delta$ is still classically described. The tape cells of the quantum tape are qubits and initialized in state~$\ket{0}$ at the start of the computation. There are two tape heads moving in both directions on the quantum tape. At each step during the computation, the machine can apply a gate from some universal gate set, say $\{\mathrm{HAD},\mathrm{T},\mathrm{CNOT}\}$, or perform a measurement to a projection in the standard basis to the qubits below its tape heads.
The measurement outcomes are communicated to the machine and can therefore control later steps of the computation. In particular, the machine can reset qubits to their initial state.
The used cells on the quantum tape count towards the computation space.
As before, we say a language is decided in~$\mathsf{BQSPACE}_{a,b}(s(n))$ if there is a QTM running in space~$s(n)$ and time $2^{O(s(n))}$ deciding it with completeness error $a$ and soundness error $b$. Also, we mean $\mathsf{BQSPACE}_{\frac{2}{3},\frac{1}{3}}(s(n))$ by~$\mathsf{BQSPACE}(s(n))$ if not mentioned otherwise. 
Finally,~$\mathsf{BQL}$ is the class of languages decided in~$\mathsf{BQSPACE}(O(\log n))$.
The particular choice of the universal gate set does not affect the resulting complexity class due to the space-efficient version of the Solovay-Kitaev theorem of van Melkebeek and Watson \cite{MW10}.
The same applies to disallowing intermediate measurements, thanks to the space-efficient ``deferred measurement principle'' proven by Fefferman and Remscrim \cite{FR21} building on the earlier work of Fefferman and Lin~\cite{FL18} (see also~\cite{GRZ21,GR21} for an alternative time- and space-efficient version of this principle which applies to a special type of intermediate measurements that cannot control later computations).
Also, the chosen success probability of $2/3$ can be amplified by sequentially repeating computations.

\subsection{Complete (promise) problems} \label{sec:complete}
In this work we only consider language classes. In particular, we present the first candidate to distinguish $\mathsf{BQL}$ and $\mathsf{BPL}$. Note that candidates for distinguishing the promise versions of these classes, $\mathsf{prBQL}$ and $\mathsf{prBPL}$, are well-known.
As mentioned in the introduction, Fefferman and Remscrim \cite{FR21} showed that well-conditioned promise versions of all the standard~$\mathsf{DET}$-complete matrix problems are complete for~$\mathsf{prBQL}$.
Let us highlight that matrix powering is one of these problems. Restricted to stochastic matrices it is easily seen to be complete for~$\mathsf{prBPL}$, and restricted to matrices for which the largest singular values of its powers grow at most polynomially, it is complete for~$\mathsf{prBQL}$.
Indeed, this seems to indicate that powering the adjacency matrix of a graph, to which our approach in \cref{thrm: counting paths on StFewL-graphs} essentially boils down to, rather than powering the corresponding random walk matrix, truly exploits a quantum advantage.
A similar distinction of $\mathsf{prBQL}$ and $\mathsf{prBPL}$ is expected from the approximation of the spectral gap of matrices. Doron, Sarid and Ta-Shma~\cite{DST17} showed that a promise decision version of this problem is $\mathsf{prBPL}$-complete for stochastic matrices while it is~$\mathsf{prBQL}$-complete for general Hermitian matrices.
More recently, Le Gall, Liu and Wang~\cite{GLW23} presented another group of~$\mathsf{prBQL}$-complete problems based on state testing.


\section{Fewness language in \textsf{BQL}}

In \cref{sec:unamb} we introduce the notions of unambiguity and fewness in space-bounded computation.
In \cref{sec:BQL-language} we prove \cref{STCONsf} presenting a language in $\mathsf{BQL}$ not known to lie in~$\mathsf{L}$ or $\mathsf{BPL}$.

\subsection{Unambiguity and fewness} \label{sec:unamb}

The computation of a Turing machine can be viewed as a directed graph on configurations, and certain restrictions on the Turing machine translate to natural restrictions on the corresponding configuration graph.
The notions of ``unambiguity'' and ``fewness'' of a Turing machine relate to the following graph-theoretic notions.
\begin{definition}\label{def:unambiguity and fewness}
    Let $G=(V,E)$ be a directed graph and let $k$ be an integer. Then $G$ is called
    \begin{itemize}
        \item $k$-\emph{unambiguous} with respect to nodes $s,t\in V$ if $N(s,t) \leq k$,
        \item $k$-\emph{reach-unambiguous} with respect to node $s\in V$ if for all $j \in V,$ $N(s,j) \leq k$,
        \item $k$-\emph{strongly unambiguous} if for all $i,j \in V,$ $N(i,j) \leq k$.
    \end{itemize}
In the case of $k=1$, we simply say $G$ is unambiguous, reach-unambiguous or strongly unambiguous, respectively.
Furthermore, a family of directed graphs $\{G_x\}_{x\in X}$ is called \emph{few-unambiguous}, \emph{reach-few} or \emph{strongly-few} if there exists a polynomial $p:\mathbb{N} \rightarrow \mathbb{N}$ such that each of the graphs~$G_x$ from the family with~$|V(G_x)|=n$ nodes is $p(n)$-unambiguous, $p(n)$-reach-unambiguous or $p(n)$-strongly unambiguous, respectively.
\end{definition}

Consider the following examples of the above definition due to Lange \cite{Lan97}:
\begin{figure}[H]
    \centering
    \begin{tikzpicture}[scale=0.6]
        \node[shape=circle,draw=black] (1) at (0,0) {$1$};
        \node[shape=circle,draw=black] (2) at (-1,-1.5) {$2$};
        \node[shape=circle,draw=black] (3) at (1,-1.5) {$3$};
        \node[shape=circle,draw=black] (4) at (-2,-3) {$4$};
        \node[shape=circle,draw=black] (5) at (0,-3) {$5$};
        \node[shape=circle,draw=black] (6) at (2,-3) {$6$};
    
        \path [->](1) edge node[left] {} (2);
        \path [->](1) edge node[left] {} (3);
        \path [->](2) edge node[left] {} (4);
        \path [->](2) edge node[left] {} (5);
        \path [->](3) edge node[left] {} (5);
        \path [->](3) edge node[left] {} (6);
    \end{tikzpicture}
    \qquad
    \begin{tikzpicture}[scale=0.6]
        \node[shape=circle,draw=black] (1) at (0,0) {$1$};
        \node[shape=circle,draw=black] (2) at (2,0) {$2$};
        \node[shape=circle,draw=black] (3) at (-1,-1.5) {$3$};
        \node[shape=circle,draw=black] (4) at (1,-1.5) {$4$};
        \node[shape=circle,draw=black] (5) at (-2,-3) {$5$};
        \node[shape=circle,draw=black] (6) at (0,-3) {$6$};
        \node[shape=circle,draw=black] (7) at (2,-3) {$7$};
    
        \path [->](1) edge node[left] {} (3);
        \path [->](1) edge node[left] {} (4);
        \path [->](2) edge node[left] {} (4);
        \path [->](2) edge node[left] {} (7);
        \path [->](3) edge node[left] {} (5);
        \path [->](3) edge node[left] {} (6);
        \path [->](4) edge node[left] {} (7);
    \end{tikzpicture}
    \qquad
    \begin{tikzpicture}[scale=0.6]
        \node[shape=circle,draw=black] (1) at (0,0) {$1$};
        \node[shape=circle,draw=black] (2) at (2,0) {$2$};
        \node[shape=circle,draw=black] (3) at (-1,-1.5) {$3$};
        \node[shape=circle,draw=black] (4) at (1,-1.5) {$4$};
        \node[shape=circle,draw=black] (5) at (3,-1.5) {$5$};
        \node[shape=circle,draw=black] (6) at (-1,-3) {$6$};
        \node[shape=circle,draw=black] (7) at (1,-3) {$7$};
        \node[shape=circle,draw=black] (8) at (3,-3) {$8$};
    
        \path [->](1) edge node[left] {} (3);
        \path [->](1) edge node[left] {} (4);
        \path [->](2) edge node[left] {} (4);
        \path [->](2) edge node[left] {} (5);
        \path [->](3) edge node[left] {} (6);
        \path [->](3) edge node[left] {} (7);
        \path [->](5) edge node[left] {} (7);
        \path [->](5) edge node[left] {} (8);
    \end{tikzpicture}
    \label{unambiguity}
\end{figure}
\noindent The left graph is unambiguous with respect to nodes $1$ and $6$ but not reach-unambiguous with respect to node $1$, the middle one is reach-unambiguous with respect to node $1$ but not strongly unambiguous and the right one is strongly unambiguous.

These notions of unambiguity and fewness naturally give rise to six complexity classes between~$\mathsf{L}$ and $\mathsf{NL}$.
We follow the original paper of Buntrock, Jenner, Lange and Rossmanith \cite{BJLR91} and define
    \emph{strongly-unambiguous logspace}, $\mathsf{StrongUL}$, 
    \emph{strongly-few logspace}, $\mathsf{StrongFewL}$,
    \emph{reach-unambiguous logspace}, $\mathsf{ReachUL}$,
    \emph{reach-few logspace}, $\mathsf{ReachFewL}$,
    \emph{unambiguous logspace},~$\mathsf{UL}$, and
    \emph{few logspace},~$\mathsf{FewL}$ 
    as the classes of languages that are decidable by an~$\mathsf{NL}$-Turing machine $M$ with unique accepting configuration whose family of configuration graphs for inputs~$x~\in~\Sigma^*$,~$\{G_{M,x}\}_{x\in\Sigma^*}$, satisfies the corresponding unambiguity or fewness restriction with respect to its starting and its accepting configuration.

Recall that \cref{thrm: counting paths on StFewL-graphs} implies the inclusion:
\stfewl*

Putting this together with the trivial containments and the previously mentioned results in~\cite{BJLR91,Lan97,AL98,GSTV11}, we obtain the following inclusion diagram to read from left to right:
\begin{figure}[H]
\begin{center}
\begin{tikzpicture}[scale=0.85]
    \node[] (1) at (0,0) {$\mathsf{L}$};
    \node[] (2) at (2,0) {$\mathsf{StrongUL}$};
    \node[] (3) at (5,0) {$\mathsf{StFewL}$};
    \node[] (4) at (8.6,0) 
        {\begin{tabular}{c} $\mathsf{ReachFewL}$ \\ $=\mathsf{ReachUL}$ \end{tabular}};
    \node[] (5) at (11.4,0) {$\mathsf{UL}$};
    \node[] (6) at (13.3,0) {$\mathsf{FewL}$};
    \node[] (7) at (15.1,0) {$\mathsf{NL}$};
    \node[] (8) at (7.3,-1.5) {$\mathsf{BQL}$};
    \node[] (9) at (11.4,-1.5) 
        {\text{ }\text{ }\text{ }\text{ }$\mathsf{DSPACE}(O(\frac{\log^2 n}{\log \log n}))$};
    \node[] (10) at (10.7,1.5) 
        {\text{ }\text{ }\text{ }\text{ }$\mathsf{SC}^2$};

    \path [-](1) edge node[left] {} (2);
    \path [-](2) edge node[left] {} (3);
    \path [-](3) edge node[left] {} (4);
    \path [-](4) edge node[left] {} (5);
    \path [-](5) edge node[left] {} (6);
    \path [-](6) edge node[left] {} (7);
    \path [-](3) edge node[left] {} (8);
    \path [-](4) edge node[left] {} (9);
    \path [-](4) edge node[left] {} (10);
\end{tikzpicture}
\end{center}
\label{unambiguity and fewness complexity diagram}
\end{figure}

\subsection{Proof of Theorem 4} \label{sec:BQL-language}

Observe that \cref{thrm: counting paths on StFewL-graphs} only shows that we can decide $st$-connectivity on directed graphs that are promised to be strongly-few. We now show that we can also check whether this promise holds in~$\mathsf{BQSPACE}(O(\log n))$, i.e. we obtain a language containment. We restate the result mentioned in the introduction:
\STCONsf*

The idea for proving the above theorem is to use Ta-Shma's spectrum approximation procedure to estimate the smallest singular value of the counting Laplacian $L$.
In case the graph is acyclic and the smallest singular value is below some threshold, this gives us a lower bound for~$\|L^{-1}\|_{\max}$, the maximum number of paths, so that we can correctly reject graphs with too many paths.
In case it is higher than the threshold, we know that $L$ is poly-conditioned and we can proceed as in \cref{thrm: counting paths on StFewL-graphs} to compute the number of paths between any pair of nodes exactly.
However, in order for this approach to work, it is crucial that the graph is acyclic. Otherwise, the smallest singular value of $L$ need not be related to the number of paths at all.
We ensure this by first mapping the input graph to a layered graph that is guaranteed to be acyclic, similar as in \cite{GSTV11}.
We give the following definition.

\begin{definition}
    Let $G=(V,E)$ be a directed graph with $|V|=n$ vertices. 
    We define the layered graph $\mathrm{lay}(G)$ on the vertex set $V':=V\times\{0,1,...,n\}$ with two types of edges:
    \begin{enumerate}
        \item For all edges $(i,j)\in E$ and all $l\leq n-2$ add an edge from $(i,l)$ to $(j,l+1)$ in $\mathrm{lay}(G)$,
        \item for all $i \in V$ and all $l\leq n-1$ add an edge from $(i,l)$ to $(i,n)$ in $\mathrm{lay}(G)$.
    \end{enumerate}
\end{definition}

It is easy to see that the paths in the first $n$ layers of $\mathrm{lay}(G)$ directly correspond to paths of length less than $n$ in $G$.
The last layer in $\mathrm{lay}(G)$ just serves as a catch basin for all paths of different lengths.
Let us quickly remark that counting paths in $\mathrm{lay}(G)$ also allows us to detect cycles in $G$. To see this, note that there is a cycle in $G$ through a node $i$ if and only if~$N((i,0),(i,n))\geq 2$ in $\mathrm{lay}(G)$.
With this in mind, let us now prove the above theorem.
\begin{proof}[Proof of \cref{STCONsf}]
    Let $\braket{G,s,t,1^k}$ be a given input graph instance with $|V(G)|=n$ nodes.
    Without loss of generality assume that $k\leq p(n)$ for some fixed polynomial $p:\mathbb{N}\rightarrow\mathbb{N}$. 
    We first construct $\mathrm{lay}(G)$ which is acyclic. Note that this is possible in $\mathsf{AC}^0$.
    We consider the counting Laplacian~$L$ of~$\mathrm{lay}(G)$ and run Ta-Shma's spectrum approximation algorithm (compare \cite[Theorem 5.2]{TS13}) to approximate its singular values with error $\varepsilon=1/6$ and accuracy $\delta = 1/(2n\cdot k)$.
    If we obtain a singular value smaller than $\delta$, then we have with probability at least $5/6$,
    \begin{align*}
        2 \delta 
        = \frac{1}{n\cdot k} 
        > \sigma_n(L) 
        = \frac{1}{\|L^{-1}\|_2} 
        \geq \frac{1}{n\cdot \|L^{-1}\|_{\max}}
        = \frac{1}{n\cdot \max_{i,j\in V(\mathrm{lay}(G))} N(i,j)}
    \end{align*}
    which implies $\max_{i,j \in V(G)} N(i,j) \geq \max_{i,j\in V(\mathrm{lay}(G))} N(i,j) > k$.
    In this case we reject the input.
    Otherwise, if we do not obtain a singular value smaller than $\delta$, then we know with the same probability that the counting Laplacian of $\mathrm{lay}(G)$ is poly-conditioned.
    Thus, we can proceed as in \cref{thrm: counting paths on StFewL-graphs} and run Ta-Shma's matrix inversion algorithm to determine all entries of~$L^{-1}$ with total error~$\varepsilon'=1/6$ and accuracy~$1/3$.
    For all~$i\in V(G)$ we check in this way whether~$L^{-1}((i,0),(i,n)) \geq 2$. If this is the case for some~$i\in V(G)$, then there is a cycle in~$G$ containing $i$, i.e.~$N(i,i)=\infty$ in~$G$, and we reject the input.
    Otherwise, we further check whether all entries of $L^{-1}$ are upper bounded by~$k$ and if~$L^{-1}((s,0),(t,n)) \geq 1$.
    The former implies~$N(i,j)\leq k$ for all~$i,j\in V(G)$, and the latter implies~$N(s,t)\geq 1$ in $G$.
    If both conditions are satisfied, we accept the input.
    Otherwise, we reject it.
    The total error probability of the algorithm is no higher than~$\varepsilon + \varepsilon' = 1/3$.
\end{proof}


\section{Counting few paths in quantum logspace}
\label{sec:4}

In this section we show how to push the algorithmic idea behind \cref{thrm: counting paths on StFewL-graphs} to obtain \cref{main}, which shows how to count $st$-paths on graphs with a polynomial bound on the number of paths leaving $s$, and on the number of paths arriving in $t$, in $\mathsf{BQSPACE}(O(\log n))$.
For this, we need the notion of an effective pseudoinverse, which we introduce in \cref{sec:pseudo}.
The proof of \cref{main} is in \cref{sec:proof-main}.

\subsection{Effective pseudoinverse} \label{sec:pseudo}

A close look at Ta-Shma's matrix inversion algorithm shows that it can be easily altered to also handle ill-conditioned matrices as input and only invert them on their well-conditioned part.
In order to appropriately state this observation we make the following definition.

\begin{definition}[Effective pseudoinverse]
    Let $M$ be an $n\times n$ matrix with singular value decomposition $M=\sum_{j=1}^n \sigma_j \ket{u_j} \bra{v_j}$. For $\zeta>0$ we define the \emph{$\zeta$-effective pseudoinverse} of $M$ as the matrix
    \begin{align*}
        M_\zeta^+ := \sum_{\sigma_j \geq \zeta} \sigma_j^{-1} \ket{v_j} \bra{u_j}.
    \end{align*}
\end{definition}

Note that in the definition we essentially drop the largest terms of the actual inverse $M^{-1}=\sum_{j=1}^n~\sigma_j^{-1}~\ket{v_j}~\bra{u_j}$.
While this produces significant error to approximate $M^{-1}$ as a whole, we find that it can still give a good approximation for some relevant entries. We now state the refined version of Ta-Shma's matrix inversion algorithm which allows us to compute effective pseudoinverses of general matrices.

\begin{theorem}
    Fix $\varepsilon(n), \zeta(n), \delta(n) > 0$ and $Z(n) \geq 1$.
    Let $M$ be an $n \times n$ matrix such that $Z \geq \sigma_1(M) \geq ... \geq \sigma_n(M)$.
    There is an algorithm running in $\mathsf{BQSPACE}(O(\log \frac{nZ}{\varepsilon\delta}))$ that given~$M$ and two indices $s,t\in [n]$ outputs with probability $1-\varepsilon$ an $\varepsilon$-additive approximation of the entry $M_{\widetilde{\zeta}}^+(s,t)$,
    where $\widetilde{\zeta}$ is a random value that is fixed at the beginning of the computation and is $\delta$-close to $\zeta$.
    \label{pseudoinverse}
\end{theorem}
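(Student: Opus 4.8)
The plan is to reuse Ta-Shma's matrix inversion pipeline (the second restated result, \cite{TS13}) essentially verbatim, changing only the singular-value--dependent rotation so that it inverts $\sigma_j$ precisely when $\sigma_j \geq \widetilde\zeta$, for a randomly perturbed threshold $\widetilde\zeta$. Concretely, the algorithm first draws $\widetilde\zeta$ uniformly at random from the interval $[\zeta,\zeta+\delta]$, truncated to $O(\log(nZ/(\varepsilon\delta)))$ bits of precision (obtained by measuring fresh $\ket{+}$ qubits), and stores it on the work tape. It then forms the Hermitian dilation $H_M = \ket{0}\bra{1}\otimes M + \ket{1}\bra{0}\otimes M^{\dagger}$, whose eigenvalues are $\pm\sigma_1,\dots,\pm\sigma_n$ with eigenvectors built from $\ket{u_j},\ket{v_j}$, simulates $e^{-iH_M\tau}$ space-efficiently via Ta-Shma's Hamiltonian-simulation building block, and runs phase estimation to precision $\eta$ on $\ket{0}\ket{e_s}$, writing an estimate $\widehat\sigma$ of the relevant $|\sigma_j|$ into an ancilla register. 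Controlled on $\widehat\sigma \geq \widetilde\zeta$ it rotates a flag qubit by $2\arcsin(\widetilde\zeta/\widehat\sigma)$ (well defined since then $\widetilde\zeta \leq \widehat\sigma$) and does nothing otherwise; uncomputing the phase register and post-selecting the flag on $\ket{1}$ leaves, up to the known normalization $\widetilde\zeta$, a state proportional to $M_{\widetilde\zeta}^+\ket{e_s}$. Finally it extracts $M_{\widetilde\zeta}^+(s,t)$ together with its sign by the same Hadamard-test plus amplitude-estimation subroutine Ta-Shma uses for a single entry of $M^{-1}$, at the cost of $O(\log n)$ extra space.

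For correctness, the only discrepancy from the exact $M_{\widetilde\zeta}^+$ comes from singular values $\sigma_j$ so close to $\widetilde\zeta$ that phase estimation cannot reliably decide whether $\sigma_j$ is above or below $\widetilde\zeta$. I would control this by the usual randomized-threshold argument: phase estimation, boosted by a median of $O(\log(1/\varepsilon))$ repetitions, is within $\rho := \eta\cdot\polylog(nZ/(\varepsilon\delta))$ of the true value except with probability $\varepsilon/\poly$, so the decision is unreliable only if $\sigma_j \in [\widetilde\zeta-\rho,\widetilde\zeta+\rho]$; since $\widetilde\zeta$ is uniform in an interval of length $\delta$ and there are at most $n$ distinct singular values, this occurs for some $j$ with probability at most $2n\rho/\delta$ over the draw of $\widetilde\zeta$. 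Choosing $\eta$, hence $\rho$, polynomially small in $\varepsilon\zeta\delta/(nZ)$ makes this at most $\varepsilon/3$. On the complementary event every $\sigma_j$ is $\geq\rho$ away from $\widetilde\zeta$, the flag rotation implements $M_{\widetilde\zeta}^+$ up to spectral-norm error $\varepsilon'$ for a suitably small $\varepsilon' = \poly(\varepsilon\zeta\delta/(nZ))$ with probability $\geq 1-\varepsilon/3$, and spending a final $\varepsilon/3$ on the amplitude-estimation accuracy yields, after dividing by the known subnormalization factor (a fixed function of $\widetilde\zeta$ and $Z$), an $\varepsilon$-additive estimate of $M_{\widetilde\zeta}^+(s,t)$. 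A union bound over the three events gives success probability $\geq 1-\varepsilon$.

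For the space bound: the phase-estimation register has $O(\log(Z/\eta)) = O(\log(nZ/(\varepsilon\delta)))$ qubits; Hamiltonian simulation of $H_M$ (whose norm is at most $Z$) runs in space $O(\log(nZ/(\varepsilon\delta)))$ and reuses space across its $\poly$-many steps; the classical registers for $s$, $t$, $\widetilde\zeta$, loop counters and median bookkeeping are $O(\log(nZ/(\varepsilon\delta)))$; and intermediate measurements can be deferred using the space-efficient deferred-measurement principle of Fefferman--Remscrim \cite{FR21}. Hence the procedure lies in $\BQSPACE(O(\log\frac{nZ}{\varepsilon\delta}))$.

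The main obstacle I expect is not conceptual but a careful error-propagation bookkeeping: one must pin down how the median-boosted phase-estimation tails determine the width $\rho$ of the unreliable window and confirm $\rho = \eta\cdot\polylog$, then push the resulting spectral-norm error through the post-selected normalization --- whose success amplitude can be as small as $\approx \widetilde\zeta\,\|M_{\widetilde\zeta}^+\ket{e_s}\|$, i.e.\ polynomially small in $\zeta/Z$ --- down to the stated $\varepsilon$-additive guarantee for the single entry, while keeping every precision parameter inside the logarithmic space budget. A secondary point worth flagging, in contrast to the poly-conditioned case of \cref{thrm: counting paths on StFewL-graphs}, is that $M$ may be genuinely singular, so post-selection need not succeed with non-negligible probability; this is harmless here because one only estimates an amplitude (an entry to additive accuracy), not the normalized output state itself.
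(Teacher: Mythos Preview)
Your proposal is correct and follows the same high-level template as the paper (Hermitian dilation, phase estimation, controlled rotation that zeroes out the small-singular-value part, then entry extraction), but it differs from the paper's proof in two genuine ways worth noting.

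First, the source of the random threshold $\widetilde\zeta$ is different. The paper does \emph{not} sample $\widetilde\zeta$ explicitly; instead it invokes Ta-Shma's \emph{consistent phase estimation}, whose random shift makes the estimate $\widetilde\lambda_j$ a deterministic function of $\lambda_j$ (a ``section number''). The comparison $|\widetilde\lambda_j|\ge\zeta$ is then performed against the fixed $\zeta$, and the paper observes a posteriori that, by monotonicity of the estimates, this coincides with $|\lambda_j|\ge\widetilde\zeta$ for some $\widetilde\zeta$ determined by the shift and $\delta$-close to $\zeta$. Because the estimate is deterministic, uncomputation of the phase register is exact. In your route the estimate $\widehat\sigma$ remains a genuine superposition, so uncomputation is only approximate; this still works, but it relies on an extra argument you gloss over, namely that the rotation angle $\arcsin(\widetilde\zeta/\widehat\sigma)$ varies little over the support of the (boosted) phase-estimation output. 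Since its derivative is $\Theta\bigl(1/\sqrt{\zeta(\sigma_j-\widetilde\zeta)}\bigr)$, this is exactly where the randomized-threshold guarantee $|\sigma_j-\widetilde\zeta|\ge\rho$ is needed a second time, not only for the binary decision. Your bookkeeping paragraph should make this explicit.

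Second, for reading off the entry the paper uses Ta-Shma's space-efficient \emph{tomography} on the (approximately) prepared state $M_{\widetilde\zeta}^+\ket{t}/\|M_{\widetilde\zeta}^+\ket{t}\|$, together with an estimate of the normalization from the success probability; it does not use a Hadamard test plus amplitude estimation, so your attribution ``the same \ldots\ subroutine Ta-Shma uses'' is inaccurate. Your Hadamard-test variant is a perfectly valid alternative (and, as you note, sidesteps the issue of vanishing post-selection probability more directly), but it is a different extraction mechanism than the one in the paper.
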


While it is a simple modification of \cite{TS13}, for completeness we provide a proof of this theorem in \cref{app:ta-shma-pseud}.
The fact that we cannot control the exact threshold $\zeta$ of which singular values should be ignored during the effective pseudoinversion is a consequence of the inaccuracy of quantum phase estimation. This inaccuracy is limited by using Ta-Shma's consistent phase estimation procedure.

\subsection{Proof of Theorem 2} \label{sec:proof-main}
We now have the necessary tools to prove our final result which we recall here:
\main*
The idea for the proof is to approximate the effective pseudoinverse entry of the counting Laplacian $L^+_{\Tilde{\zeta}}(s,t)$ for some small enough $\Tilde{\zeta} = 1/\mathrm{poly}(n)$ instead of the actual inverse entry~$L^{-1}(s,t)$.
While ignoring the smallest singular values during the effective pseudoinversion normally leaves out the largest terms, we find that our path bounds imply low overlap of~$\braket{s|v_j}$ and~$\braket{u_j|t}$ for small singular values $\sigma_j$ such that entry~$L^+_{\Tilde{\zeta}}(s,t)$ is close to~$L^{-1}(s,t)$.
\begin{proof}
    Without loss of generality we assume the graph to be acyclic. Otherwise, we proceed as in \cref{sec:BQL-language} and first map it to $\mathrm{lay}(G)$ which is guaranteed to be acyclic.
    Now consider the counting Laplacian and its singular value decomposition $L=I-A = \sum_{j=1}^n \sigma_j \ket{u_j}\bra{v_j}$. Its inverse is $L^{-1} = \sum_{j=1}^n \sigma_j^{-1} \ket{v_j}\bra{u_j}$.
    Further, consider the vectors~$(L^{-1})^T \ket{s}$ and $L^{-1}\ket{t}$.
    They contain as entries the number of paths starting in $s$ and the number of paths ending in~$t$, respectively. Hence, we find for their squared $\ell_2$-norms:
    \begin{align*}
        \left\|(L^{-1})^T \ket{s}\right\|_2^2 
        &= \left\| \sum_{j=1}^n \sigma_j^{-1} \ket{u_j} \braket{v_j|s} \right\|_2^2
        = \sum_{j=1}^n \sigma_j^{-2} \left|\braket{v_j|s}\right|^2
        \leq n \cdot p(n)^2 \qquad \text{ and similarly}\\
        \left\|L^{-1}\ket{t}\right\|_2^2 
        &= \left\| \sum_{j=1}^n \sigma_j^{-1} \ket{v_j}\braket{u_j|t} \right\|_2^2
        = \sum_{j=1}^n \sigma_j^{-2} \left|\braket{u_j|t}\right|^2
        \leq n \cdot p(n)^2
    \end{align*}
    where the last equalities of both lines follow because the $\{\ket{u_j}\}_{j\in[n]}$ and $\{\ket{v_j}\}_{j\in[n]}$ are orthonormal bases.
    As a consequence we obtain for each $j \in [n]$
    \begin{align*}
        \left|\braket{s|v_j}\right| \leq \sigma_j \sqrt{n} \cdot p(n) 
        \quad \text{and} \quad
        \left|\braket{u_j|t}\right| \leq \sigma_j \sqrt{n} \cdot p(n).
    \end{align*}
    Combining the two we find as a bound for the $j$-th term of the singular value decomposition of~$L^{-1}$,
    \begin{align*}
        \sigma_j^{-1} \left|\braket{s|v_j} \braket{u_j|t}\right| \leq \sigma_j n \cdot p(n)^2.
    \end{align*}
    This allows us to estimate the error in computing an effective pseudoinverse entry $L_{\widetilde{\zeta}}^+(s,t)=\sum_{\sigma_j \geq \widetilde{\zeta}} \sigma_j^{-1} \braket{s|v_j} \braket{u_j|t}$ instead of the actual inverse entry $L^{-1}(s,t) = N(s,t)$.
    In fact, for~$\widetilde{\zeta}\leq\frac{1}{5n^2\cdot p(n)^2}$ we have
    \begin{align*}
        \left| L^{-1}(s,t) - L_{\widetilde{\zeta}}^+(s,t) \right|
        \leq \sum_{\sigma_j < \widetilde{\zeta}} \sigma_j^{-1} \left| \braket{s|v_j} \braket{u_j|t} \right| 
        < \widetilde{\zeta} n^2 \cdot p(n)^2
        \leq 1/5.
    \end{align*}
    Choosing $Z=n$, $\varepsilon=1/5$ and $\delta = \zeta = \frac{1}{10n^2\cdot p(n)^2}$ in \cref{pseudoinverse} ensures~$\widetilde{\zeta}~\leq~\frac{1}{5n^2~\cdot~p(n)^2}$ and yields an additive $2/5$ approximation of $L^{-1}(s,t)$ within the desired space complexity. 
    Rounding to the closest integer gives the number of paths from $s$ to $t$.
\end{proof}

A natural open question is whether the simultaneous polynomial bound on (i) the number of paths starting from $s$ and on (ii) the number of paths ending in $t$ is really necessary for a~$\mathsf{BQSPACE}(O(\log n))$ procedure.
Unfortunately, at least with our approach above, this seems to be the case.
Note that (i) implies low overlap of~$\ket{s}$ with the left singular vectors~$\ket{v_j}$ of $L^{-1}$ and (ii) implies low overlap of $\ket{t}$ with the right singular vectors $\ket{u_j}$ of $L^{-1}$. 
It turns out that if only one of the two is small, then the contribution of $\sigma_j^{-1} \braket{s|v_j} \braket{u_j|t}$ to~$L^{-1}(s,t)$ can still be significant for very small $\sigma_j$ but will be ignored in the pseudoinversion.

\section*{Acknowledgements}
We thank Fran{\c{c}}ois Le Gall for discussions about the differences between language and promise classes. He made us aware that our results can be seen as first evidence that the language classes $\BQL$ and $\BPL$ are distinct.
We further acknowledge useful discussions with Klaus-Jörn Lange and we thank Lior Eldar, Troy Lee and Ronald de Wolf for valuable comments on an earlier draft.
This work has received support under the program ``Investissement d'Avenir'' launched by the French Government and implemented by ANR, with the reference ``ANR‐22‐CMAS-0001, QuanTEdu-France''.
SA was partially supported by French projects EPIQ (ANR-22-PETQ-0007), QUDATA (ANR18-CE47-0010), QUOPS (ANR-22-CE47-0003-01) and HQI (ANR-22-PNCQ-0002), and EU project QOPT (QuantERA ERA-NET Cofund 2022-25).

\bibliographystyle{alpha}
\bibliography{refs}

\appendix

\section{Effective pseudoinversion in quantum logspace} \label{app:ta-shma-pseud}
The algorithm for the effective pseudoinversion is essentially identical to Ta-Shma's matrix inversion procedure except for a small change in the rotation step to appropriately treat ill-conditioned matrices. 
For completeness we describe it here but refer to \cite{TS13} for the detailed complexity and error analysis.
\begin{proof}[Proof sketch of \cref{pseudoinverse}.]
    We make the following two simplifying assumptions:
    \begin{itemize}
        \item Without loss of generality we assume that $Z=1$. Otherwise, we simply choose $Z~=~n\cdot~\|M\|_{\max}$ as an upper bound for $\sigma_1(M)$ and rescale the matrix with factor $1/Z$.
            Approximating the entries of a $\tilde{\big(\frac{\zeta}{Z}\big)}$-effective pseudoinverse of this rescaled matrix up to accuracy $\varepsilon \cdot Z$ gives an~$\varepsilon$ approximation of the entries of a $\tilde{\zeta}$-effective pseudoinverse of $M$ within the same space complexity.
        \item Furthermore, we assume the matrix $M$ to be Hermitian. Otherwise we use the well-known reduction to the Hermitian case
            \begin{align*}
                H = H(M) := \begin{bmatrix}
                    0 & M^{\dag} \\
                    M & 0
                \end{bmatrix}
                \qquad \text{ with inverse}
                \qquad
                H^{-1} = \begin{bmatrix}
                    0 & M^{-1} \\
                    (M^\dag)^{-1} & 0
                \end{bmatrix}.
            \end{align*}
            It is easily verified that the eigenpairs of $H$ are given by $\{\pm\sigma_j, \frac{1}{\sqrt{2}} \left(\ket{0}\ket{v_j} \pm \ket{1} \ket{u_j}\right)\}_{j \in [n]}$, where $\sigma_j$ and $\ket{v_j},\ket{u_j}$ denote the singular values and vectors of $M$, respectively. We then find $M^+_{\widetilde{\zeta}}(s,t) = H^+_{\widetilde{\zeta}}(s,t+n)$.
    \end{itemize}

    \noindent Assuming the above, let the spectral decomposition of the matrix be given by $M=H=\sum_{j=1}^n \lambda_j \ket{h_j}\bra{h_j}$ and let $\ket{t} = \sum_{j=1}^n \beta_j \ket{h_j}$.
    The algorithm works on four registers: An input register $I$, an estimation register $E$, a shift register $S$ and an ancillary register $A$ of dimension at least three.
    \begin{enumerate}
        \item We start by preparing the initial state
            \begin{equation*}
                \ket{t}_I\ket{0}_E\ket{0}_S\ket{\text{initial}}_A
                = \sum_{j=1}^n \beta_j \ket{h_j}_I\ket{0}_E\ket{0}_S\ket{\text{initial}}_A.
            \end{equation*}
        \item We then apply the consistent phase estimation procedure (compare \cite[Section 5.2]{TS13}) to the input, estimation and shift register and obtain a state close to
            \begin{align*}
                \sum_{j=1}^n \beta_j \ket{h_j}_I \ket{0}_E \ket{s(j)}_S
            \end{align*}
            where $s(j)$ is the \emph{j-th section number}, that is a fixed classical value depending on $h_j$ only, from which we can recover a $\delta$-approximation $\widetilde{\lambda}_j = \widetilde{\lambda}(s(j))$ of the eigenvalue $\lambda_j$.
        \item We next approximately apply a unitary map acting on the shift and ancillary register partially described by
            \begin{align*}
                \ket{s}_S \ket{\text{initial}}_A \mapsto 
                    \begin{cases}
                        \ket{s}_S \left( \frac{\zeta}{\widetilde{\lambda}(s)} \ket{\text{well}}_A + \sqrt{1 - \left( \frac{\zeta}{\widetilde{\lambda}(s)} \right)^2} \ket{\text{nothing}}_A \right)
                            & \text{ if } \left|\widetilde{\lambda}(s)\right| \geq \zeta,\\
                        \ket{s}_S \ket{\text{ill}}_A
                            & \text{ if } \left|\widetilde{\lambda}(s)\right| < \zeta.
                    \end{cases}
            \end{align*}
        \item We reverse the consistent phase estimation and are left with a state close to
            \begin{equation*}
                \sum_{\left|\widetilde{\lambda}_j\right|\geq \zeta}^n \beta_j \ket{h_j}_I \left(
                    \frac{\zeta}{\widetilde{\lambda}_j} \ket{\text{well}}_A + \sqrt{1-\left( \frac{\zeta}{\widetilde{\lambda}_j}\right)^2} \ket{\text{nothing}}_A\right)
                    + \sum_{\left|\widetilde{\lambda}_j \right| < \zeta}^n \beta_j \ket{h_j}_I \ket{\text{ill}}_A.
            \end{equation*}
            Note that the approximations of the eigenvalues are monotone in the sense that $\lambda_i \leq \lambda_j$ implies $\widetilde{\lambda}_i\leq\widetilde{\lambda}_j$.
            From this we get the existence of $\widetilde{\zeta}_+ > 0$ and $\widetilde{\zeta}_- < 0$, both in absolute value $\delta$-close to $\zeta$, such that
                $\{j\in[n]:|\widetilde{\lambda}_j|\geq\zeta\} 
                = \{j\in[n]:\lambda_j\geq\tilde{\zeta}_{+} \text{ or }\lambda_j\leq\tilde{\zeta}_{-}\}$.
            Choosing and combining symmetric shifts for the positive and negative eigenvalues during the consistent phase estimation allows to assume $|\widetilde{\zeta}_+| = |\widetilde{\zeta}_-|$.
            Denoting this value by $\tilde{\zeta}$ we find
                $\{j\in[n]:|\widetilde{\lambda}_j|\geq\zeta\} = \{j\in[n]:|\lambda_j|\geq\tilde{\zeta}\}$.
        \item Finally, we measure the ancillary register. If the measurement outcome is $\ket{\text{well}}_A$, this leaves us with a state close to the normalized desired one $\frac{1}{\big\|M_{\widetilde{\zeta}}^+\ket{t}\big\|_2} M_{\widetilde{\zeta}}^+\ket{t}$.
            In fact, estimating the success-probability of this measurement outcome gives a good approximation of
            \begin{equation*}
                \zeta^2 \sum_{|\lambda_j|\geq\widetilde{\zeta}} \beta_j^2 \lambda_j^{-2} = \zeta^2 \Big\|M_{\widetilde{\zeta}}^+\ket{t}\Big\|_2^2
            \end{equation*}
            from which we recover the norm $\Big\|M_{\widetilde{\zeta}}^+\ket{t}\Big\|_2$.
    \end{enumerate}
    
    Repeating the steps above sufficiently many times lets us create multiple copies of the final state close to $\frac{1}{\big\|M_{\widetilde{\zeta}}^+\ket{t}\big\|_2} M_{\widetilde{\zeta}}^+\ket{t}$.
    We then use Ta-Shma's space-efficient tomography procedure \cite[Theorem 6.1]{TS13} to approximately learn the state, and in particular entry~$\frac{1}{\big\|M_{\widetilde{\zeta}}^+\ket{t}\big\|_2} \bra{s} M_{\widetilde{\zeta}}^+\ket{t}$.
\end{proof}

\end{document}